\documentclass{article}
\usepackage{iclr2027_conference,times}
\iclrfinalcopy
\usepackage[T1]{fontenc}
\usepackage[utf8]{inputenc}
\usepackage{amsmath,amsthm}
\usepackage{newtxmath}
\usepackage{bm}
\usepackage{graphicx}
\usepackage{booktabs,tabularx,array,multirow,makecell}
\usepackage{algorithm}
\usepackage{algpseudocode}
\usepackage{float}
\usepackage{enumitem}
\usepackage[section]{placeins}
\usepackage{xcolor}
\usepackage{microtype}
\usepackage{hyperref}
\usepackage{url}
\hypersetup{hidelinks}
\usepackage[table]{xcolor}

\newtheorem{proposition}{Proposition}
\newtheorem{definition}{Definition}
\newtheorem{remark}{Remark}

\title{LoRo-Mark: Provably Lossless And Robust Agent Watermarking}

\author{
    HaoYang Zou \\ University of Science and Technology of China
    \And
    Yao Wang \\ University of Science and Technology of China
    \And
    Jun Yao \\ University of Science and Technology of China
    \And
    Weiming Zhang \\ University of Science and Technology of China
    \And
    Han Fang \\ University of Science and Technology of China
}

\begin{document}
\raggedbottom

\maketitle

\begin{abstract}
As large language model (LLM) agents are increasingly deployed as commercial services, protecting their proprietary orchestration logic and tool-use policies has become an important concern. We consider a realistic infringement scenario, termed \textbf{agent repackaging}, in which an adversary integrates a protected agent into its own application through an API and presents it under its own service identity. The adversary may further modify parts of its execution process to obscure the original source. In such cases, the owner typically has access only to the repackaged service interface, making black-box ownership verification essential. Agent watermarking provides a natural way to embed ownership evidence into the agent’s behavior for later verification. Under this setting, we argue that an effective watermark should satisfy two key requirements: losslessness, requiring watermarking to preserve the agent’s original functionality, and robustness, requiring ownership evidence to remain recoverable after partial modification of the execution process. Existing methods, however, typically embed ownership signals into behavior selection or execution trajectories, intervene in normal agent decisions and provide limited robustness to behavior modification. To address these limitations, we propose \textbf{LoRo-Mark}, a provably \textbf{lo}ssless and \textbf{ro}bust agent water\textbf{mark}ing mechanism. For losslessness, LoRo-Mark isolates watermarking into a cryptographically authenticated forensic branch that remains inactive during normal execution and can only be activated by owner-authorized requests. By reducing unauthorized branch activation to standard MAC security, LoRo-Mark provides a formal cryptographic guarantee of performance preservation. For robustness, it redundantly distributes ownership information across forensic behavior sequences, enabling reliable recovery under partial behavior substitution and sequence truncation. Experiments across multiple LLM agents demonstrate zero degradation on normal tasks and reliable ownership verification under sequence modifications.
\end{abstract}

\section{Introduction}

Large language model (LLM) agents are increasingly deployed in enterprise automation, software engineering, and other real-world applications~\citep{yang2024sweagent,zhou2023webarena,bran2023chemcrow,wang2023voyager}. Beyond the underlying foundation model, their practical value often lies in system-level capabilities such as task decomposition, tool-use strategies, domain-specific prompting, and the handling of intermediate observations, which together determine how an agent plans and executes complex multi-step tasks~\citep{xi2023llmagents,yao2023react,schick2023toolformer,park2023generative,liu2023agentbench,shinn2023reflexion,yao2023tree,wei2022chain,nakano2021webgpt,qin2023toolllm,patil2023gorilla,karpas2022mrkl}. Developing such orchestration logic typically requires substantial engineering effort and accumulated operational expertise, making it an increasingly important form of intellectual property. Protecting this system-level IP has therefore become a growing concern as agent-based services move toward large-scale deployment~\citep{wang2026agentwm}.
\begin{figure}[t]
  \centering
  \includegraphics[width=\linewidth]
  {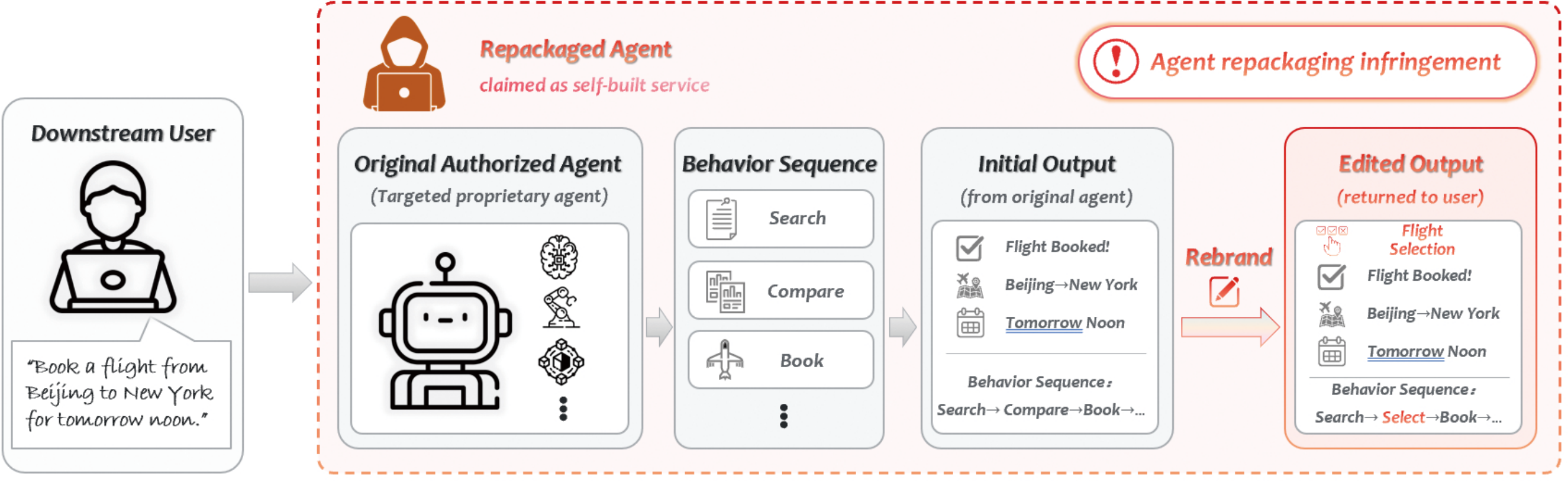}
  \caption{Scenario of repackaging infringement: A downstream user sends a request to the repackaged agent to book a specific flight. The repackaged agent completes the booking task via a sequence of tool calls and returns an edited output modified from the initial output to obscure the original source. }
  \label{fig:threat-model}
\end{figure}

Among the potential threats to agent IP, we focus on a realistic yet under-explored infringement scenario, which we term agent repackaging. As illustrated in Figure~\ref{fig:threat-model}, an adversary may integrate a protected agent into its own application through an API and present it under its own service identity. To obscure the original source, the adversary can alter the host application’s front-end presentation, including the displayed agent thinking process, behavior sequence, action trajectory, and observation results. Such repackaging allows the adversary to exploit the capabilities of a non-self-developed agent while presenting them as its own. More importantly, the original owner typically has no access to the internal implementation of the suspicious service and can only interact with its exposed interface, making black-box ownership verification particularly important. Agent watermarking therefore provides a natural means of embedding ownership evidence into agent behavior for subsequent verification~\citep{uchida2017embedding,adi2018turning,rouhani2018deepsigns,chen2018deepmarks,chen2019blackmarks,kirchenbauer2023watermark,christ2024undetectable,kuditipudi2024robust,dathathri2024scalable,huang2025agentguide,huang2026agentmark,wang2026agentwm}.

Under agent repackaging, an effective watermark should satisfy two key requirements: losslessness and robustness~\citep{christ2024undetectable,kuditipudi2024robust}. First, watermarking is introduced for ownership protection and should therefore preserve the agent’s original functionality during normal use. Second, a repackaging adversary may partially modify the agent’s execution process to obscure its source, requiring ownership evidence to remain recoverable after such modifications. Existing agent watermarks, however, typically encode ownership signals through specific behavior selections or execution trajectories~\citep{huang2025agentguide,huang2026agentmark,wang2026agentwm,an2026seqwm}. Such designs intervene in the agent’s decision process, making strict functionality preservation difficult, while the resulting ownership evidence remains vulnerable to downstream behavior modification.

These limitations motivate us to rethink how watermarking should interact with agent execution. For provable losslessness, the key is to maintain an untouched normal execution path while making unauthorized activation of the watermark branch cryptographically negligible. Rather than minimizing watermark-induced perturbations, we preserve the original execution path for ordinary requests and activate watermarking only through a MAC-authenticated forensic branch~\citep{bellare1996keying,goldwasser2022backdoors,nist2008hmac}. For robustness, we redundantly distribute ownership information across the forensic behavior sequence, preventing the evidence from depending on any single behavior and enabling its recovery after partial behavior modification or truncation~\citep{lin2004error}.

Building on these principles, we propose \textbf{LoRo-Mark}, a provably \textbf{lo}ssless and \textbf{ro}bust agent water\textbf{mark}ing mechanism for black-box ownership verification. LoRo-Mark uses cryptographic authentication to control access to the forensic branch and redundantly embeds copyright information across the resulting behavior sequence~\citep{bellare1996keying,nist2008hmac}. During verification, the owner uses the secret key to reconstruct the pseudorandom behavior mappings, translate the observed behaviors back into watermark bits, and recover the copyright message through redundancy-aware decoding~\citep{goldreich1986construct,lin2004error}. We formally reduce unauthorized branch activation to MAC-EUF-CMA security, establishing provable performance preservation for ordinary requests, and derive recovery guarantees under behavior substitution and sequence truncation. 

Our main contributions are summarized as follows:
\begin{itemize}[leftmargin=1.5em]

\item We focus on \textbf{agent repackaging}, a realistic yet under-explored infringement scenario in which a protected agent can be proxied, redeployed, and partially modified behind a black-box service. We identify \textbf{losslessness} and \textbf{robustness} as two key requirements for agent watermarking under this setting: preserving the agent's original functionality while retaining ownership evidence after repackaging.

\item We propose \textbf{LoRo-Mark}, a provably lossless and robust agent watermarking mechanism. LoRo-Mark maintains an untouched normal execution path through a MAC-authenticated forensic branch, making unauthorized activation cryptographically negligible, while redundantly encoding ownership information across forensic behavior sequences to enable reliable recovery under partial behavior substitution and sequence truncation.

\item We provide rigorous theoretical and empirical validation of both properties.  Extensive experiments across diverse agents, models, and tasks demonstrate zero degradation on normal execution and reliable ownership verification under substantial sequence modifications.

\end{itemize}

\section{Related Work}

\subsection{Agent Watermarking}
Existing agent watermarking approaches embed ownership signals across three principal loci---planning behaviors, execution trajectories, and memory-state evolutions. At the behavior layer, \textit{AgentGuide}~\citep{huang2025agentguide} targets high-level planning decisions rather than surface tokens~\citep{kirchenbauer2023watermark,christ2024undetectable,kuditipudi2024robust,dathathri2024scalable}. It biases the behavior distribution over a predefined set $\mathcal{B}$ via a pseudorandom green-list~\citep{kirchenbauer2023watermark} and uses a $z$-statistic aggregated over multiple rounds for single-bit presence detection. \textit{AgentMark}~\citep{huang2026agentmark} instead encodes multi-bit identifiers through \emph{distribution-preserving conditional sampling}, which preserves the marginal behavior distribution while remaining compatible with black-box agent APIs. At the trajectory layer, \textit{SeqWM}~\citep{an2026seqwm} conditions each embedding step on history-dependent state transitions; \textit{ACTHOOK}~\citep{meng2026acthook} injects auxiliary decoy actions activated by predefined trigger keys; and \textit{AGENTWM}~\citep{wang2026agentwm} discovers pools of \emph{semantically equivalent tool-execution paths} and biases the agent toward user-specific subsets, with verification based on Jensen--Shannon divergence. At the memory layer, \textit{MEMMARK}~\citep{zhang2026memmark} embeds signals into latent memory-state transitions, enabling provenance attribution from the final memory snapshot even when intermediate trajectories are unavailable.  

\subsection{Cryptographically Undetectable Backdoors}
Shafi Goldwasser and his coworkers demonstrated that a malicious trainer can plant undetectable backdoors in classifiers: no computationally bounded observer, whether black‑box or white‑box, can tell backdoored models apart from clean ones without the secret key~\citep{goldwasser2022backdoors}. Their design runs a digital‑signature verification circuit parallel to the base model, activating the backdoor solely for valid‑signature inputs while leaving others intact, with white‑box undetectability grounded in Sparse PCA hardness. The core insight is that a secret key provides cryptographically secure control over model behavior — key holders can reliably trigger target outputs, whereas observers without keys detect nothing. We draw inspiration from this key‑driven behavioral asymmetry. Nevertheless, agent scenarios bring unique hurdles including multi‑step reasoning, tool calls, and emergent plan‑level semantics, so we cannot directly reuse classifier‑level constructions and must build an agent‑tailored watermarking protocol instead.

\section{Methodology}
\subsection{Threat Model}
\textbf{Adversary Capabilities.}
We model a protected agent as $A=(M,\Pi,\mathcal{T})$, where $M$ denotes the underlying LLM, $\Pi$ the orchestration policy, and $\mathcal{T}$ the available tool set. At step $t$, the agent selects a high-level behavior $b_t\in\mathcal{B}_t$ from a candidate behavior set $\mathcal{B}_t$ and instantiates it into a concrete action $a_t$. A complete execution produces a behavior sequence $\tau_b=(b_1,\ldots,b_n)$ and the corresponding action trace $\tau_a=(a_1,\ldots,a_n)$~\citep{huang2025agentguide,huang2026agentmark,wang2026agentwm}. After obtaining access to the protected agent, the adversary may repackage it into a new service $A'=(M,\Pi,\mathcal{T})$, while preserving the underlying LLM, orchestration policy, available tools, and actual execution process. Instead, the adversary may modify the host application's front-end presentation, including the displayed agent thinking process, behavior sequence, action trajectory, and observation results, before exposing them to downstream users. We therefore view agent repackaging as a service-level transformation
\[
\mathcal{R}: A \rightarrow A',
\]where $\mathcal{R}$ leaves the execution flow, decision logic, tool set, and actual trajectory unchanged, but modifies their exposed presentation while still preserving the core utility and advertised functionality of the original agent. The adversary may know the watermarking mechanism and interact with the protected agent through public interfaces, but does not possess the owner's secret keys.

\textbf{Defender Goal.}
Given only black-box access to the repackaged agent $A'$, the defender aims to determine whether $A'$ still contains or invokes the protected agent $A$~\citep{huang2025agentguide,huang2026agentmark,wang2026agentwm}. 
The defender constructs an owner-authorized forensic request $x^{*}$ using the secret key $k$ and submits it to $A'$. 
The repackaged service returns an observable execution trace
$
\tau' = \bigl(b'_1,a'_1,\ldots,b'_m,a'_m\bigr),
$
which may already have been modified by the repackaging transformation $\mathcal{R}$.
The defender then applies a keyed extractor
$
\hat{c}=\mathsf{Ext}_k(\tau')
$
to recover the ownership message and accepts the ownership claim iff
$
\mathsf{Ver}_k(\hat{c},c)=1,
$
where $c$ is the claimed watermark payload. Accordingly, the verification goal is successful black-box watermark recovery from the repackaged execution trace, even when $\tau'$ has undergone partial modification or truncation.

\begin{figure}[t]
  \centering
  \includegraphics[
    width=0.9\linewidth,
    height=7.0cm,
    keepaspectratio
  ]{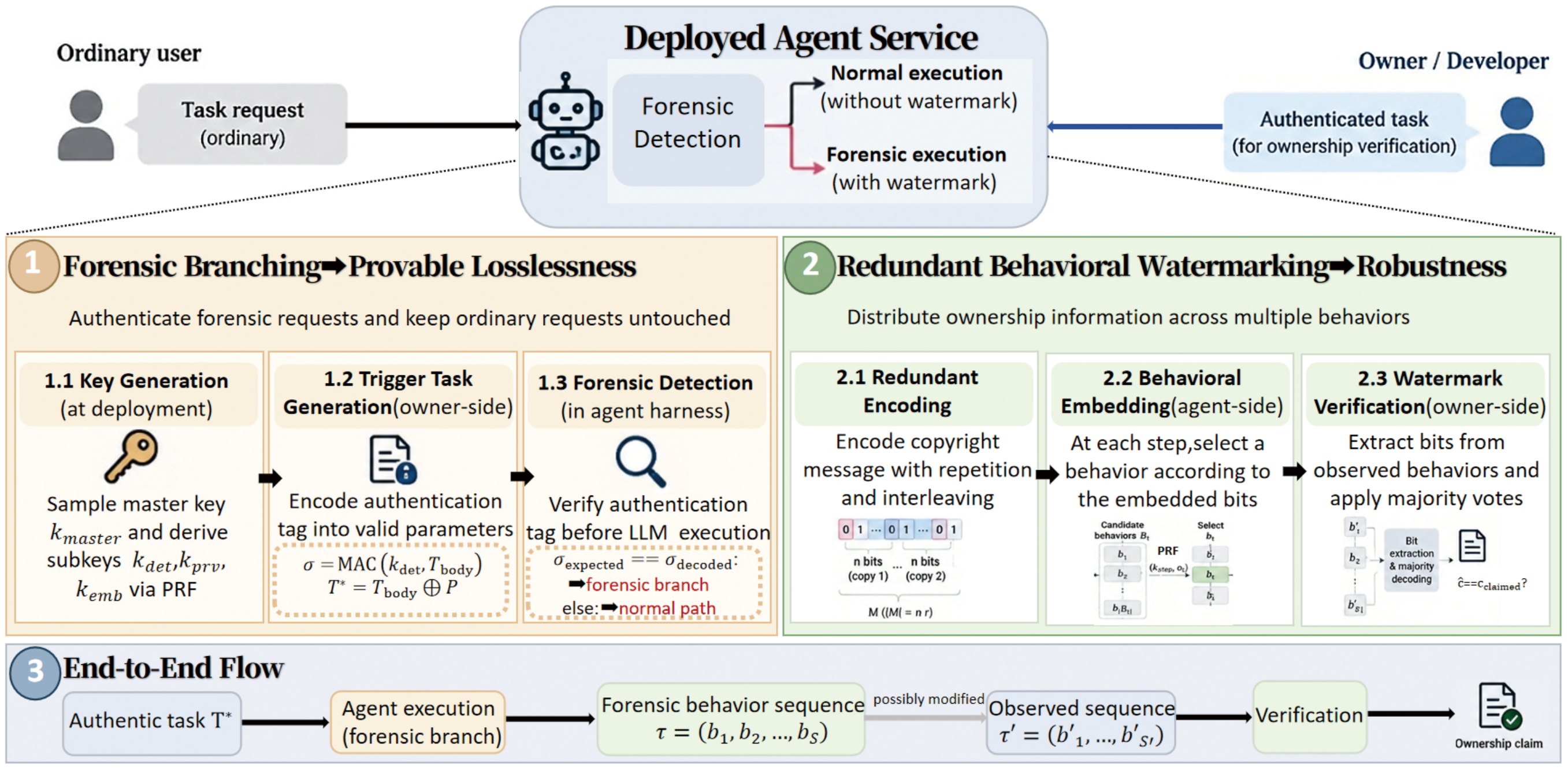}
  \caption{An overview of LoRo-Mark.}
  \label{fig:protocol-overview}
\end{figure}

\subsection{LoRo-Mark}
Figure~\ref{fig:protocol-overview} presents an overview of LoRo-Mark.
LoRo-Mark is designed around the two requirements identified above:
\emph{provable losslessness} and \emph{robustness}.
Accordingly, its design consists of two key components.
First, \emph{forensic branching} preserves
an untouched execution path for ordinary requests while making unauthorized
activation of the watermark branch cryptographically negligible, thereby
enabling provable losslessness.
Second, \emph{redundant behavioral watermarking}~\citep{lin2004error} distributes ownership
information across the forensic behavior sequence, allowing the watermark
to remain recoverable under partial sequence modification.




\subsubsection{Forensic Branching}
The key to provable losslessness is not merely to reduce watermark-induced
perturbations, but to ensure that ordinary requests remain on an execution
path untouched by watermarking, except with cryptographically negligible
probability.
To achieve this, we place all watermark functionality behind an authenticated
forensic branch.
An ordinary request follows the original agent execution path without watermark
intervention, whereas only a request carrying a valid owner-generated
authentication tag can activate the forensic branch.
We instantiate this design through three modules:
\textbf{Key Generation}, \textbf{Trigger Task Generation}, and \textbf{Forensic Detection}.

\paragraph{\textbf{Key Generation}.}
At deployment time, given a security parameter $\lambda$, the developer
samples a master key $k_{\mathrm{master}}$ and derives three subkeys
$\{
    k_{\det},k_{\mathrm{prv}},k_{\mathrm{emb}}\}
$
through domain-separated PRF evaluations~\citep{goldreich1986construct}.
The detection key $k_{\det}$ authenticates forensic requests,
$k_{\mathrm{emb}}$ determines how authentication information is encoded into
valid task parameters, and $k_{\mathrm{prv}}$ is reserved for keyed behavioral
watermarking in the forensic branch.
The owner retains these keys for forensic verification, while the deployed
agent stores the corresponding keys required for request detection and
watermark generation.


\paragraph{\textbf{Trigger Task Generation}.}
To initiate ownership verification, the owner constructs an authenticated
forensic request from an otherwise valid task $T_{\mathrm{body}}$.
Let $P$ denote a set of semantically valid task parameters, such as budget,
date, quantity, preference, or priority.
We first compute~\citep{bellare1996keying}
\[
    \sigma = \operatorname{MAC}(k_{\det},T_{\mathrm{body}}),
\]
and then use $k_{\mathrm{emb}}$ to encode $\sigma$ into the parameter values
$P$.
The resulting forensic request is
\[
    T^{*} = T_{\mathrm{body}} \oplus P,
\]
where $\oplus$ denotes assembling the task body with the encoded parameters.
Because $P$ occupies legitimate parameter slots and remains within their valid
domains, $T^{*}$ retains the syntax and semantics of a valid agent request.
When the available parameter capacity is smaller than the full MAC output,
the authentication tag can be truncated to $\ell$ bits; the resulting security--capacity trade-off is evaluated in Section~\ref{sec:5}.

\paragraph{\textbf{Forensic Detection}.}
Forensic detection (denoted as $\mathcal{F}$) is implemented in the agent harness and is performed before
LLM execution. Given an incoming request $T=(T_{\mathrm{body}},P)$, it computes
\begin{align}
  \sigma_{\mathrm{expected}}
    &= \operatorname{MAC}(k_{\det},T_{\mathrm{body}}),\\
  \sigma_{\mathrm{decoded}}
    &= \operatorname{\mathcal{D}}(P,k_{\mathrm{emb}}).
\end{align}
where $\mathcal{D}$ indicates the operation that decodes $\sigma_{\mathrm{decoded}}$ from the parameters $P$. If $\sigma_{\mathrm{expected}}=\sigma_{\mathrm{decoded}}$, the request is
routed to the forensic branch; otherwise, it proceeds along the original
execution path. Hence, activating the forensic branch without the owner's
keys requires producing a valid authentication tag. For any unauthorized
request $T$,
\begin{equation}
  \Pr[\mathcal{F}(T)=1]
  \leq \varepsilon_{\mathrm{MAC}}(\lambda),
\end{equation}
where $\varepsilon_{\mathrm{MAC}}(\lambda)$ is negligible under standard MAC
security~\citep{bellare1996keying}. Consequently, ordinary requests remain on the original execution
path except with cryptographically negligible probability, providing the
basis for LoRo-Mark's provable losslessness.

\subsubsection{Redundant Behavioral Watermarking}

LoRo-Mark embeds the ownership
information into the resulting behavior sequence.
The key design principle for robustness is to avoid relying on any individual
behavior as ownership evidence. Instead, we redundantly encode the copyright
message and distribute the encoded bits across multiple forensic behaviors.
Consequently, even if part of the behavior sequence is substituted or
truncated during repackaging, the surviving behaviors can still provide
sufficient evidence for watermark recovery.
This process consists of three stages: \textbf{Redundant Encoding},
\textbf{Behavioral Embedding}, and \textbf{Watermark Verification}.

\paragraph{\textbf{Redundant Encoding}.}
Let $c$ denote the copyright message and
$n=8|c|$ its bit length, with
$c=(c_0,c_1,\ldots,c_{n-1})$.
To distribute each copyright bit across different positions of the forensic
sequence, we encode $c$ with repetition redundancy $r$~\citep{lin2004error}.
Rather than placing the $r$ copies of each bit consecutively, we interleave
multiple copies of the entire copyright message:
\begin{equation}
    M =
    (c_0c_1\cdots c_{n-1})
    \,\|\,\cdots\,\|\,
    (c_0c_1\cdots c_{n-1}),
\end{equation}
where the message is repeated $r$ times and $|M|=nr$.
Hence, the $k$-th copy of copyright bit $c_i$ appears at position
$i+kn$ in $M$.
This interleaved layout spreads different copies of the same copyright bit
across distant positions of the behavior sequence, allowing them to survive
partial sequence corruption and truncation.

\paragraph{\textbf{Behavioral Embedding}.}
LoRo-Mark next maps the redundantly encoded bitstream $M$ into a sequence of
forensic behaviors.
Given the authentication tag $\sigma$ generated for the forensic request,
we derive a request-specific embedding key
\begin{equation}
    k_{\mathrm{step}}
    = \operatorname{MAC}(k_{\mathrm{prv}},\sigma).
\end{equation}
At execution step $t$, the agent observes $o_t$ and constructs an admissible
candidate behavior set $\mathcal{B}_t$.
Using $k_{\mathrm{step}}$ and $o_t$, LoRo-Mark generates a pseudorandom
permutation $\pi_t = \operatorname{PRF}(k_{\mathrm{step}},o_t,\mathcal{B}_t)$~\citep{goldreich1986construct} 
over the candidate behaviors.
The number of bits that can be embedded at this step is $q_t=\left\lfloor\log_2|\mathcal{B}_t|\right\rfloor$. LoRo-Mark reads the next $q_t$ bits from $M$, interprets them as an integer $j_t\in[0,2^{q_t}-1]$, and selects $b_t=\pi_t[j_t]$. Repeating this process across execution steps produces the forensic behavior
sequence
\[
    \tau=(b_1,b_2,\ldots,b_S),
\]
through which the redundant copyright bitstream is distributed across
multiple observable behaviors.

\paragraph{\textbf{Watermark Verification}.}
Given a behavior sequence $\tau'=(b'_1,\ldots,b'_{S'})$ observed from a
suspicious repackaged service, the owner reverses the embedding process using
the secret key.
For each surviving behavior $b'_t$, the verifier reconstructs the same
candidate set and pseudorandom ordering $\pi_t$, identifies the index of
$b'_t$ in $\pi_t$, and converts that index back into the corresponding
embedded bits.
Aggregating the extracted bits yields a recovered redundant bitstream
$\widehat{M}$.

The verifier then groups the surviving copies corresponding to each copyright
bit and performs majority decoding~\citep{lin2004error} to obtain
\[
    \hat{c}=(\hat{c}_0,\ldots,\hat{c}_{n-1}).
\]
The ownership claim is accepted iff
\(\hat{c}=c_{\mathrm{claimed}}\).

Because each copyright bit is represented at multiple positions,
as long as sufficient copies survive for decoding, the ownership message remains recoverable. We formalize these robustness guarantees under behavior
substitution and sequence truncation in Section~\ref{sec:supp-robustness}.

\section{Provable Losslessness}
\label{sec:losslessness}

The losslessness of LoRo-Mark follows from a simple observation:
the watermarked agent differs from the original agent only when the
forensic branch is activated.
Therefore, it suffices to show that an unauthorized request activates
this branch only with cryptographically negligible probability.
We first bound such unauthorized activation through the security of the
underlying MAC~\citep{bellare1996keying,nist2008hmac}, and then show that this bound directly implies
losslessness.

\begin{definition}[$(\varepsilon,t,q,\lambda)$-losslessness]
\label{def:losslessness}
Let $\mathrm{API}_{\mathrm{WM}}$ denote the LoRo-Mark-protected agent
and $\mathrm{API}_{\mathrm{plain}}$ the original agent.
LoRo-Mark is $(\varepsilon,t,q,\lambda)$-\emph{lossless} if, for every
PPT distinguisher $\mathcal{A}$ running in time at most $t$ and making
at most $q$ API queries without access to the owner's secret keys,
\begin{equation}
    \operatorname{Adv}_{\mathrm{loss}}(\mathcal{A})
    :=
    \left|
    \Pr[\mathcal{A}^{\mathrm{API}_{\mathrm{WM}}}\!\to 1]
    -
    \Pr[\mathcal{A}^{\mathrm{API}_{\mathrm{plain}}}\!\to 1]
    \right|
    \leq \varepsilon(\lambda).
\end{equation}
\end{definition}

\newtheorem{theorem}{Theorem}

\begin{theorem}[Provable Losslessness of LoRo-Mark]
\label{thm:losslessness}
Suppose the MAC scheme used by LoRo-Mark is
$(\varepsilon_{\mathrm{MAC}},t,q,\lambda)$-EUF-CMA secure.
Then LoRo-Mark is
$(\varepsilon_{\mathrm{MAC}},t,q,\lambda)$-lossless.
\end{theorem}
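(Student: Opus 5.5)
The argument is a standard identical-until-bad hybrid, followed by a reduction of the bad event to MAC forgery.

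\textbf{Setup and bad event.} Run the distinguisher $\mathcal{A}$ against either oracle, $\mathrm{API}_{\mathrm{WM}}$ or $\mathrm{API}_{\mathrm{plain}}$. Both are built from the same agent $(M,\Pi,\mathcal{T})$, and we couple them on the same internal randomness (LLM sampling, tool outcomes). For the $i$-th query $T_i=(T_{\mathrm{body},i},P_i)$, let $\mathsf{Bad}$ be the event that for some $i\le q$ the forensic detector fires, i.e.\ $\mathcal{D}(P_i,k_{\mathrm{emb}})=\operatorname{MAC}(k_{\det},T_{\mathrm{body},i})$.

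\textbf{Identical until bad.} By construction of Forensic Detection, a request on which $\mathcal{F}(T_i)=0$ is routed to the original execution path with no watermark intervention. So, conditioned on $\neg\mathsf{Bad}$, every answer of $\mathrm{API}_{\mathrm{WM}}$ coincides with that of $\mathrm{API}_{\mathrm{plain}}$ under the coupling. Whether $\mathsf{Bad}$ occurs is also determined identically in both worlds: before the first trigger, the transcripts agree. The fundamental lemma of game playing then gives
\[
\operatorname{Adv}_{\mathrm{loss}}(\mathcal{A})\le \Pr[\mathsf{Bad}],
\]
with the probability taken in either world, for instance the plain one.

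\textbf{Reduction to EUF-CMA.} We build a forger $\mathcal{B}$ that has oracle access to $\operatorname{MAC}(k_{\det},\cdot)$, although it will need no tag queries.
\begin{enumerate}
\item $\mathcal{B}$ samples $k_{\mathrm{emb}}$ and $k_{\mathrm{prv}}$ itself.
\item It runs $\mathcal{A}$, answering each query with the plain agent. Up to the first trigger this is a perfect simulation of both worlds.
\item For each query it computes $\sigma_i=\mathcal{D}(P_i,k_{\mathrm{emb}})$ and records the candidate forgery $(T_{\mathrm{body},i},\sigma_i)$.
\item At the end it outputs a uniformly chosen candidate. Alternatively, if the MAC verification oracle is available (the $q$-query verification variant), it submits all candidates.
\end{enumerate}
Because $\mathcal{B}$ never queried the MAC, any valid pair is a fresh forgery. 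Hence $\Pr[\mathsf{Bad}]\le\varepsilon_{\mathrm{MAC}}$, up to a factor $q$ if we pick one candidate at random rather than use a verification oracle. The running time of $\mathcal{B}$ is $t$ plus the cost of simulating the agent.

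\textbf{Main obstacle: key independence.} The delicate step is justifying that the derived keys can be treated as independent. The subkeys $k_{\det},k_{\mathrm{emb}},k_{\mathrm{prv}}$ come from one master key via a PRF, so I would first insert a hybrid that replaces them with independent uniform keys. That costs a PRF advantage term, which I would either fold into $\varepsilon_{\mathrm{MAC}}$ or state as absorbed by assumption, matching the theorem's clean bound.

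\textbf{Parameter bookkeeping.} The theorem's exact parameters $(\varepsilon_{\mathrm{MAC}},t,q,\lambda)$ are recovered by counting each API query as at most one verification attempt. The truncated-tag case simply replaces $\varepsilon_{\mathrm{MAC}}$ by roughly $q2^{-\ell}$.
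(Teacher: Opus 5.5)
Your proposal is correct and follows essentially the paper's route. The paper likewise bounds $\Pr[\mathsf{Bad}]$ with a forger that samples $k_{\mathrm{emb}}$ itself and outputs $(T_{\mathrm{body}},\mathcal{D}(P,k_{\mathrm{emb}}))$, then uses that the two APIs coincide on $\neg\mathsf{Bad}$ to get $\operatorname{Adv}_{\mathrm{loss}}(\mathcal{A})\le\Pr[\mathsf{Bad}]\le\varepsilon_{\mathrm{MAC}}(\lambda)$. The paper's proof passes over the points you flag: the forger cannot tell which query fired, so you need either a factor $q$ or a verification-query EUF-CMA notion; the subkeys derived from $k_{\mathrm{master}}$ need a PRF hybrid; and simulating the agent adds to $\mathcal{B}$'s running time. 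Your version is therefore the more careful one, and it makes explicit that the clean bound $\varepsilon_{\mathrm{MAC}}$ holds only under those conventions.
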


\begin{proof}
We prove the result in two steps.

\textbf{Step 1: Unauthorized forensic activation is negligible.}
Consider any request
$T=(T_{\mathrm{body}},P)$ that is not generated by the authorized
Trigger Task Generation process.
Forensic Detection accepts $T$ only if
\begin{equation}
    \operatorname{\mathcal{D}}(P,k_{\mathrm{emb}})
    =
    \operatorname{MAC}(k_{\det},T_{\mathrm{body}}).
\end{equation}
Hence, an adversary that successfully constructs an unauthorized request
accepted by Forensic Detection also produces a valid authentication tag
for $T_{\mathrm{body}}$.

Suppose an adversary $\mathcal{A}$ can cause such an activation with
probability $p$.
We construct a MAC forger $\mathcal{B}$ that samples
$k_{\mathrm{emb}}$ independently and runs $\mathcal{A}$.
Whenever $\mathcal{A}$ outputs an accepted request
$T=(T_{\mathrm{body}},P)$, $\mathcal{B}$ outputs
\begin{equation}
    \left(
    T_{\mathrm{body}},
    \operatorname{\mathcal{D}}(P,k_{\mathrm{emb}})
    \right)
\end{equation}
as a MAC forgery.
Therefore, by EUF-CMA security,
\begin{equation}
    \Pr[\mathsf{Bad}]
    :=
    \Pr\!\left[
        \exists\, T :
        \operatorname{\mathcal{F}}(T)=1
        \text{ for an unauthorized } T
    \right]
    \leq
    \varepsilon_{\mathrm{MAC}}(\lambda).
\end{equation}

\textbf{Step 2: No forensic activation implies identical normal execution.}
Conditioned on $\neg\mathsf{Bad}$, every query issued by
$\mathcal{A}$ follows the original execution path of the agent.
LoRo-Mark does not modify the underlying model, orchestration policy,
tool invocation, or output generation on this path.
Consequently,
$\mathrm{API}_{\mathrm{WM}}$ and
$\mathrm{API}_{\mathrm{plain}}$ induce identical output distributions
conditioned on $\neg\mathsf{Bad}$.

The two APIs can therefore differ only if $\mathsf{Bad}$ occurs, yielding
\begin{align}
    \operatorname{Adv}_{\mathrm{loss}}(\mathcal{A})
    &\leq
    \Pr[\mathsf{Bad}]\\
    &\leq
    \varepsilon_{\mathrm{MAC}}(\lambda).
\end{align}
Thus, LoRo-Mark is
$(\varepsilon_{\mathrm{MAC}},t,q,\lambda)$-lossless.
\end{proof}

\section{Experiments}
\label{sec:5}
We evaluate LoRo-Mark around the two central properties established in our design:
\emph{losslessness} and \emph{robustness}.
For losslessness, we examine whether forensic branching introduces any unintended
activation or performance degradation during normal agent execution.
For robustness, we evaluate whether the embedded ownership information remains
recoverable when the forensic behavior sequence is partially modified during
agent repackaging. 

\subsection{Experimental Setup}

\textbf{Dataset.}
We conduct our experiments on \textsc{xLAM-function-calling-60k}~\citep{liu2024apigen},
which contains 60,000 function-calling trajectories covering 3,673 tools
across 21 domains.
We use the natural-language queries as task bodies $T_{\mathrm{body}}$
and their associated tool specifications to construct agent execution tasks.
For utility evaluation, we select three representative domains,
\emph{Data}, \emph{Business}, and \emph{Social}, and sample 100 tasks from
each domain.

\textbf{Agent Setup.}
Unless otherwise specified, we use Qwen3-4B~\citep{yang2025qwen3} as the
default LLM backbone and ReAct~\citep{yao2023react} as the agent framework,
with tools executed through a function-calling interface.
To evaluate whether LoRo-Mark depends on a particular backbone, we additionally
conduct experiments with Qwen3-8B~\citep{yang2025qwen3} and Ministral-8B-Instruct-2410~\citep{mistralai2024ministral}. The experimental results remain highly consistent across all models because both Forensic Detection and Behavioral Embedding operate within the agent harness before the underlying LLM is invoked, making LoRo-Mark independent of the backbone model.  Due to space constraints, the following section focuses on agents built on a single LLM backbone as a representative example. 
For agent utility evaluation, we use DeepSeek-V3~\citep{deepseekai2024deepseekv3} as
the judge model to assess both final responses and execution trajectories.

\textbf{Watermark Configuration.}
LoRo-Mark uses HMAC-SHA256 as the authentication primitive~\citep{bellare1996keying,nist2008hmac}.
Forensic requests are generated by computing an authentication tag over
$T_{\mathrm{body}}$ and encoding the tag into semantically valid parameter
slots, while normal tasks are left unchanged.
Unless otherwise specified, we use the copyright message corresponding to 104 bits, and an interleaved repetition redundancy of $r=3$.

\textbf{Evaluation Metrics.}
We evaluate LoRo-Mark from two primary perspectives.
For watermark verification, we report the \emph{true-positive rate} (TPR)
on authorized forensic requests and the \emph{false-positive rate} (FPR)
on normal or unauthorized requests.
For agent utility, we report Pass Rate (PR), Response Score (RS),
Trajectory Score (TS), and Success Rate (SR)~\citep{huang2026agentmark,wang2026agentwm}, computed over \(N\)
evaluation tasks as
\[
\mathrm{PR}
=
\frac{1}{N}
\sum_{i=1}^{N}
\mathbb{I}
\left[
\text{response}_i \text{ fulfills the request}
\right],
\]
\[
\mathrm{RS}
=
\frac{1}{N}
\sum_{i=1}^{N}
\frac{r_i-1}{4},
\qquad
\mathrm{TS}
=
\frac{1}{N}
\sum_{i=1}^{N}
\frac{t_i-1}{4},
\qquad
\mathrm{SR}
=
\frac{1}{N}
\sum_{i=1}^{N}
\mathbb{I}
\left[
\text{task}_i \text{ is successfully completed}
\right],
\]
where \(r_i,t_i\in\{1,\ldots,5\}\) are the response-quality and
trajectory-quality scores, respectively, and \(\mathbb{I}[\cdot]\)
is the indicator function. Higher values indicate better task performance.
For robustness, we report TPR, computed using the same definition above,
after applying different degrees of behavior substitution or sequence
truncation to the forensic behavior sequence.

\begin{table*}[t]
\centering
\caption{Utility comparison of different agent watermarking schemes.}
\label{tab:aligned-utility}

\renewcommand{\arraystretch}{1.18}
\setlength{\tabcolsep}{4.5pt}

\resizebox{\textwidth}{!}{%
\begin{tabular}{c|cccc|cccc|cccc}
\toprule[1.5pt]

\rowcolor{gray!20}
& \multicolumn{4}{c|}{\textbf{Data}}
& \multicolumn{4}{c|}{\textbf{Business}}
& \multicolumn{4}{c}{\textbf{Social}} \\

\rowcolor{gray!20}
\multirow{-2}{*}{\textbf{Method}}
& \textbf{PR} $\uparrow$
& \textbf{RS} $\uparrow$
& \textbf{TS} $\uparrow$
& \textbf{SR} $\uparrow$
& \textbf{PR} $\uparrow$
& \textbf{RS} $\uparrow$
& \textbf{TS} $\uparrow$
& \textbf{SR} $\uparrow$
& \textbf{PR} $\uparrow$
& \textbf{RS} $\uparrow$
& \textbf{TS} $\uparrow$
& \textbf{SR} $\uparrow$ \\

\specialrule{0.4pt}{1pt}{1pt}
\specialrule{0.4pt}{0pt}{1pt}

Unwatermarked
& 0.6200 & 0.6175 & 0.8850 & 0.9900
& 0.6000 & 0.6025 & 0.8825 & 0.9900
& 0.7100 & 0.6375 & 0.8825 & 0.9800 \\

AGENTWM
& 0.5500 & 0.5700 & 0.8375 & 0.8800
& 0.5100 & 0.5400 & 0.8075 & 0.8500
& 0.6300 & 0.5875 & 0.8150 & 0.8900 \\

\rowcolor{gray!10}
Per. Drop ($\Delta$)
& $-11.29\%$ & $-7.69\%$ & $-5.37\%$ & $-11.11\%$
& $-15.00\%$ & $-10.37\%$ & $-8.50\%$ & $-14.14\%$
& $-11.27\%$ & $-7.84\%$ & $-7.65\%$ & $-9.18\%$ \\

AgentMark
& 0.5600 & 0.5975 & 0.8675 & 0.9600
& 0.5000 & 0.5550 & 0.8300 & 0.9300
& 0.5900 & 0.5900 & 0.8250 & 0.9200 \\

\rowcolor{gray!10}
Per. Drop ($\Delta$)
& $-9.68\%$ & $-3.24\%$ & $-3.03\%$ & $-3.03\%$
& $-16.67\%$ & $-7.88\%$ & $-5.95\%$ & $-6.06\%$
& $-16.90\%$ & $-7.45\%$ & $-6.52\%$ & $-6.12\%$ \\

\specialrule{0.4pt}{1pt}{0pt}

\rowcolor{cyan!10}
\textbf{LoRo-Mark}
& \textbf{0.6200} & \textbf{0.6175}
& \textbf{0.8850} & \textbf{0.9900}
& \textbf{0.6000} & \textbf{0.6025}
& \textbf{0.8825} & \textbf{0.9900}
& \textbf{0.7100} & \textbf{0.6375}
& \textbf{0.8825} & \textbf{0.9800} \\

\rowcolor{cyan!22}
\textbf{Per. Drop ($\Delta$)}
& $\mathbf{0.00\%}$ & $\mathbf{0.00\%}$
& $\mathbf{0.00\%}$ & $\mathbf{0.00\%}$
& $\mathbf{0.00\%}$ & $\mathbf{0.00\%}$
& $\mathbf{0.00\%}$ & $\mathbf{0.00\%}$
& $\mathbf{0.00\%}$ & $\mathbf{0.00\%}$
& $\mathbf{0.00\%}$ & $\mathbf{0.00\%}$ \\

\bottomrule[1.5pt]
\end{tabular}%
}

\end{table*}

\subsection{Losslessness}

We first evaluate whether LoRo-Mark preserves normal agent execution in
practice. According to our theoretical analysis, losslessness relies on two
conditions: ordinary requests should not activate the forensic branch, and,
when the branch is not activated, the agent should follow the original
execution path without watermark intervention. We therefore evaluate both
false activation and task utility.

\textbf{False Activation.}
We first measure whether normal requests accidentally activate the forensic
branch. We randomly sample 10,000 normal tasks from
\textsc{xLAM-function-calling-60k}~\citep{liu2024apigen} and pass them through Forensic Detection.
None of the requests is accepted as a forensic task, yielding $\mathrm{FPR} = \frac{0}{10{,}000}=0.00\%$.

This empirical result is consistent with our theoretical analysis that
unauthorized forensic activation is bounded by the security of the underlying
authentication mechanism.

\textbf{Utility Preservation.}
We next examine whether introducing LoRo-Mark changes the agent's performance
on ordinary tasks. We compare LoRo-Mark with the unwatermarked agent and two
existing agent watermarking methods, AGENTWM~\citep{wang2026agentwm} and
AgentMark~\citep{huang2026agentmark}, on the \emph{Data}, \emph{Business},
and \emph{Social} domains. For each domain, we evaluate 100 tasks and report
Pass Rate (PR), Response Score (RS), Trajectory Score (TS), and Success Rate
(SR), with DeepSeek-V3~\citep{deepseekai2024deepseekv3} serving as the judge. As shown in Table~\ref{tab:aligned-utility}, existing watermarking methods introduce
different degrees of performance degradation because watermark embedding
intervenes in normal behavior selection or execution. In contrast, LoRo-Mark
achieves exactly the same scores as the unwatermarked agent across all three
domains and all four metrics, resulting in $\Delta \mathrm{PR}=
    \Delta \mathrm{RS}
    =
    \Delta \mathrm{TS}
    =
    \Delta \mathrm{SR}
    =
    0$. These results empirically confirm that ordinary requests remain on the
untouched execution path of the original agent, consistent with the
losslessness guarantee established in Section~\ref{sec:losslessness}.

\subsection{Robustness}

We evaluate the robustness of LoRo-Mark against two representative
repackaging operations: partial behavior substitution and sequence truncation.

\textbf{Behavior Substitution.}
We randomly replace a fraction $\rho$ of behaviors in the forensic sequence
and evaluate whether the copyright message can still be recovered.
As shown in Figure~\ref{fig:robustness}(a), LoRo-Mark consistently achieves
higher TPR than AGENTWM~\citep{wang2026agentwm} and AgentMark~\citep{huang2026agentmark} under increasing substitution ratios.
With $r=3$, LoRo-Mark remains reliable under moderate behavior modification.

\textbf{Sequence Truncation.}
We further truncate the forensic sequence and retain only a prefix of the
original execution trace. As shown in Figure~\ref{fig:robustness}(b), with $r=3$, LoRo-Mark achieves 100\% TPR once approximately
34.6\% of the original steps are retained, while shorter prefixes are insufficient for complete recovery. These results confirm that redundant behavioral watermarking preserves ownership evidence under partial modification and truncation of the execution sequence.

\begin{figure}[h]
  \centering
  \includegraphics[width=0.9\linewidth]{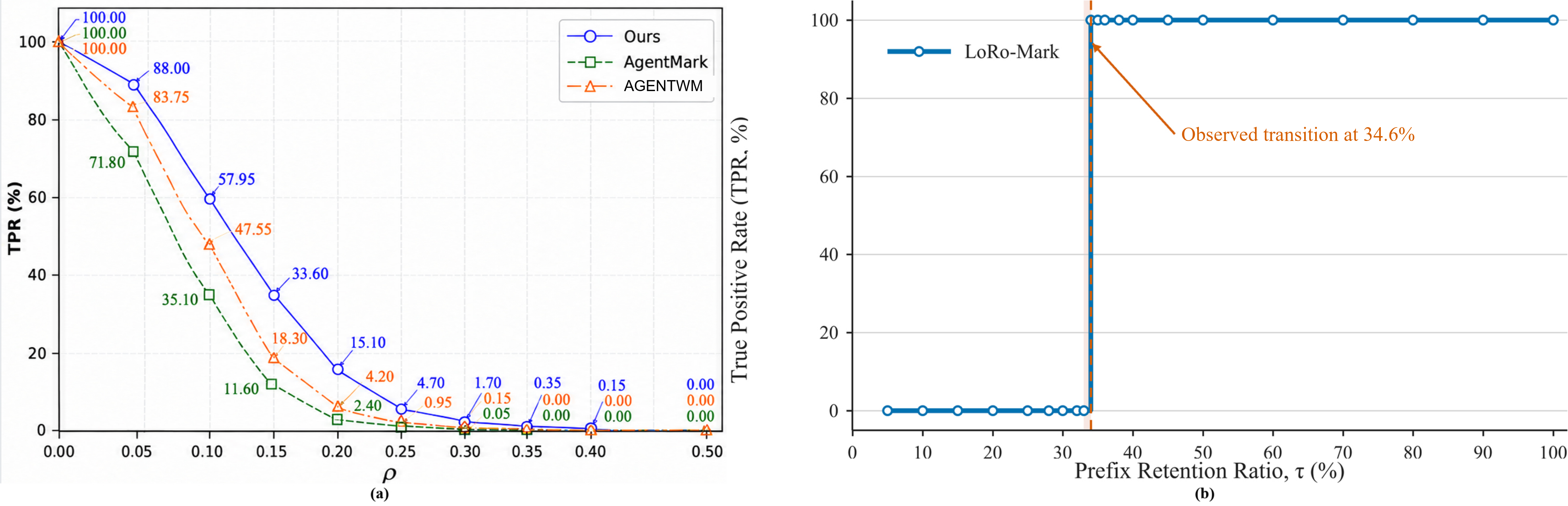}
  \caption{(a) Comparison of TPR for different watermarking schemes under behavior substitution with varying ratios \(\rho\). (b) TPR for LoRo-Mark under different prefix retention ratios.}
  \label{fig:robustness}
\end{figure}
\subsection{Ablation and Parameter Analysis}
\textbf{Effect of Redundancy.}
We first study the effect of the repetition factor $r$, which directly
controls the trade-off between robustness and forensic trace length.
For a copyright message of $|c|$ bytes and a candidate behavior set of
size $|\mathcal{B}|$, LoRo-Mark embeds
$\lfloor \log_2 |\mathcal{B}| \rfloor$ bits per step.
Accordingly, the number of steps required to encode the redundant
copyright message is
$
    S =
    \left\lceil
    \frac{8|c|r}
    {\lfloor \log_2 |\mathcal{B}| \rfloor}
    \right\rceil.
$
Thus, increasing $r$ introduces additional forensic steps but provides
more redundant copies for watermark recovery. To quantify this trade-off, we fix $|\mathcal{B}|=64$ and
$|c|=13$ bytes, and vary $r\in\{3,5,7\}$.
As shown in Table~\ref{tab:tpr-repetition-factor}, larger $r$ consistently
improves TPR under the same behavior-substitution ratio, confirming that
additional redundancy strengthens resistance to sequence corruption~\citep{lin2004error}.
Meanwhile, the required forensic trace length grows linearly with $r$.
We therefore use $r=3$ as the default setting, which provides a practical
balance between robustness and verification overhead.

\textbf{Effect of Parameter Capacity.}
We evaluate how parameter capacity and MAC length jointly affect false activation.
Across 12 configurations with
$\mathrm{capacity\_bits}\in\{14,64,96\}$ and
$\mathrm{mac\_bits}\in\{32,64,96,128\}$,
the empirical FPR consistently follows the theoretical bound
$2^{-\min(\mathrm{capacity\_bits},\mathrm{mac\_bits})}$.
For the 14-bit capacity setting, we observe an FPR of
$5.0\times10^{-5}$, close to the theoretical $2^{-14}$,
while all configurations with at least 32 effective bits produce zero
false activations in 10,000 trials.
These results confirm that the effective security is determined by the
smaller of parameter capacity and MAC length.

\begin{table*}[t]
\centering
\caption{True-positive rates under different behavior-substitution ratios ($\rho$) and repetition factors ($r$).}
\label{tab:tpr-repetition-factor}
\renewcommand{\arraystretch}{1.35}
\setlength{\tabcolsep}{4.5pt}
\resizebox{\textwidth}{!}{%
\begin{tabular}{
>{\centering\arraybackslash}c|
>{\centering\arraybackslash}c|
*{8}{>{\centering\arraybackslash}c|}
>{\centering\arraybackslash}c
}
\toprule[1.5pt]
\rowcolor{gray!20}
&
&
\multicolumn{9}{c}{\textbf{TPR (\%)}}
\\
\rowcolor{gray!20}
\multirow{-2}{*}{\textbf{$r$}}
&
\multirow{-2}{*}{\textbf{Steps}}
&
$\rho=0.05$
&
$\rho=0.10$
&
$\rho=0.15$
&
$\rho=0.20$
&
$\rho=0.25$
&
$\rho=0.30$
&
$\rho=0.35$
&
$\rho=0.40$
&
$\rho=0.50$
\\
\specialrule{0.4pt}{1pt}{1pt}
\specialrule{0.4pt}{0pt}{1pt}

\textbf{3}
&
\textbf{52}
&
100.00
&
88.00
&
57.95
&
33.60
&
15.10
&
4.70
&
1.70
&
0.15
&
0.00
\\

\rowcolor{gray!10}
\textbf{5}
&
\textbf{87}
&
100.00
&
98.65
&
91.10
&
73.95
&
52.20
&
27.45
&
12.30
&
1.30
&
0.00
\\

\textbf{7}
&
\textbf{122}
&
100.00
&
100.00
&
98.70
&
91.25
&
77.65
&
59.00
&
37.00
&
8.05
&
0.60
\\
\bottomrule[1.5pt]
\end{tabular}%
}
\end{table*}

\section{Conclusion}

We study black-box ownership verification under agent repackaging and propose \textbf{LoRo-Mark}, a provably lossless and robust agent watermarking mechanism. LoRo-Mark isolates watermarking from normal execution through an authenticated forensic branch and redundantly distributes ownership information across forensic behavior sequences. We formally establish performance preservation under standard MAC security and empirically demonstrate unchanged utility on normal tasks together with robust watermark recovery under behavior substitution and sequence truncation. Our results suggest a simple design principle for agent watermarking: ownership verification can be separated from normal execution by design.
\subsection*{AI use statement}

In this work, we used generative AI tools to assist with literature research on the agent-repackaging threat scenario and related work. We also used generative AI tools to help refine the experimental design for evaluating losslessness, robustness, and ablation settings; organize relevant benchmarks; and structure and summarize experimental data. After completing the manuscript, we used generative AI tools for language polishing, including correcting grammatical and typographical errors, checking presentation details, and improving clarity and fluency. All AI-assisted research, experimental designs, data summaries, citations, and textual revisions were independently reviewed and verified by the authors. We take full responsibility for the final content of this work, including all text, claims, experimental results, and artifacts produced with the aid of generative AI.

\subsection*{Reproducibility statement}

We provide the information and materials necessary to reproduce our theoretical and empirical results. The main paper describes the threat model, watermark construction, experimental settings, evaluation metrics, baselines, benchmarks, and parameter configurations. The appendix provides detailed algorithms, assumptions, and complete proofs of the losslessness, unforgeability, and robustness results. The supplementary materials include the implementation code, environment dependencies, configuration files, and instructions for reproducing the losslessness, robustness, and ablation experiments. The experimental settings and reported parameters are documented in the accompanying README to facilitate consistent reproduction of the results.

\bibliographystyle{iclr2027_conference}
\bibliography{references}

\clearpage
\appendix
\section{Detailed Algorithms of LoRo-Mark}
\label{sec:supp-algorithms}

This appendix uses the component names and notation introduced in
Section~3.2. In particular, $|c|$ denotes the byte length of the copyright
message, $n=8|c|$ denotes its bit length, and
$c=(c_0,\ldots,c_{n-1})$ denotes its binary representation. When the
available parameter capacity supports only an $\ell$-bit tag, we write
\[
  \operatorname{MAC}_{\ell}(k,x)
  :=\operatorname{Trunc}_{\ell}(\operatorname{MAC}(k,x));
\]
without truncation, $\operatorname{MAC}_{\ell}$ is simply
$\operatorname{MAC}$. The parameter encoder $\mathcal{E}$ and decoder
$\mathcal{D}$ use the same key $k_{\mathrm{emb}}$ and tag length $\ell$.
Their correctness requires
\[
  \mathcal{D}\bigl(\mathcal{E}(T_{\mathrm{body}},\sigma,
  k_{\mathrm{emb}}),k_{\mathrm{emb}}\bigr)=\sigma
\]
for every encodable $\ell$-bit tag $\sigma$.
At each step, $o_t$ denotes the reproducible, domain-separated execution
context, including the step identifier and the observation needed to reconstruct
$\mathcal{B}_t$. We write $\operatorname{WatermarkVerify}$ for
Algorithm~\ref{alg:verify}.

\begin{algorithm}[H]
  \caption{Key Generation}
  \label{alg:keygen}
  \begin{algorithmic}[1]
    \Require Security parameter $\lambda$
    \Ensure $k_{\det}, k_{\mathrm{prv}}, k_{\mathrm{emb}}$
    \State $k_{\mathrm{master}} \gets \{0,1\}^{\lambda}$
    \State $k_{\det} \gets \operatorname{PRF}(k_{\mathrm{master}},\texttt{detect})$~\citep{goldreich1986construct}
    \State $k_{\mathrm{prv}} \gets \operatorname{PRF}(k_{\mathrm{master}},\texttt{prove})$
    \State $k_{\mathrm{emb}} \gets \operatorname{PRF}(k_{\mathrm{master}},\texttt{embed})$
    \State \Return $(k_{\det},k_{\mathrm{prv}},k_{\mathrm{emb}})$
  \end{algorithmic}
\end{algorithm}

\begin{algorithm}[H]
  \caption{Trigger Task Generation}
  \label{alg:trigger-task-gen}
  \begin{algorithmic}[1]
    \Require Valid task body $T_{\mathrm{body}}$, keys $k_{\det},k_{\mathrm{emb}}$, tag length $\ell$
    \Ensure Authenticated forensic request $T^{\ast}$
    \State $\sigma \gets \operatorname{MAC}_{\ell}(k_{\det},T_{\mathrm{body}})$
    \State $P \gets \mathcal{E}(T_{\mathrm{body}},\sigma,k_{\mathrm{emb}})$
    \State $T^{\ast} \gets T_{\mathrm{body}}\oplus P$
    \State \Return $T^{\ast}$
  \end{algorithmic}
\end{algorithm}

\begin{algorithm}[H]
  \caption{Forensic Detection}
  \label{alg:detect-task}
  \begin{algorithmic}[1]
    \Require Incoming request $T=(T_{\mathrm{body}},P)$, keys $k_{\det},k_{\mathrm{emb}}$, tag length $\ell$
    \Ensure Detection result $(s,\sigma)$, where $s\in\{0,1\}$
    \State $\sigma_{\mathrm{expected}} \gets
      \operatorname{MAC}_{\ell}(k_{\det},T_{\mathrm{body}})$
    \State $\sigma_{\mathrm{decoded}} \gets \mathcal{D}(P,k_{\mathrm{emb}})$
    \If{$\sigma_{\mathrm{decoded}}=\sigma_{\mathrm{expected}}$}
      \State \Return $(1,\sigma_{\mathrm{expected}})$
    \Else
      \State \Return $(0,\bot)$
    \EndIf
  \end{algorithmic}
\end{algorithm}

\begin{algorithm}[H]
  \caption{Redundant Encoding and Behavioral Embedding}
  \label{alg:forensic-behavior}
  \begin{algorithmic}[1]
    \Require Copyright message $c$, repetition factor $r$, tag $\sigma$, key $k_{\mathrm{prv}}$
    \Require Reproducible execution contexts $(o_t,\mathcal{B}_t)$ with $|\mathcal{B}_t|\geq2$
    \Require Sufficient cumulative capacity to encode $nr$ bits
    \Ensure Forensic behavior sequence $\tau=(b_1,\ldots,b_S)$
    \State $n\gets8|c|$; $M\gets(c_0\cdots c_{n-1})\|\cdots\|(c_0\cdots c_{n-1})$ with $r$ copies
    \State $k_{\mathrm{step}}\gets\operatorname{MAC}(k_{\mathrm{prv}},\sigma)$
    \State $\tau\gets[\,]$; $p\gets0$; $t\gets1$
    \While{$p<nr$}
      \State $q_t\gets\lfloor\log_2|\mathcal{B}_t|\rfloor$
      \State $\pi_t\gets\operatorname{PRF}(k_{\mathrm{step}},o_t,\mathcal{B}_t)$
      \State $\ell_t\gets\min\{q_t,nr-p\}$
      \State $u_t\gets M[p:p+\ell_t-1]\|0^{q_t-\ell_t}$
      \State $j_t\gets\operatorname{int}(u_t)$; $b_t\gets\pi_t[j_t]$
      \State Append $b_t$ to $\tau$; $p\gets p+\ell_t$; $t\gets t+1$
    \EndWhile
    \State $S\gets|\tau|$
    \State \Return $\tau$
  \end{algorithmic}
\end{algorithm}

\begin{algorithm}[H]
  \caption{Watermark Verification}
  \label{alg:verify}
  \begin{algorithmic}[1]
    \Require $T^{\ast}=(T_{\mathrm{body}},P)$, observed behavior sequence
      $\tau'=(b'_1,\ldots,b'_{S'})$
    \Require Claimed message $c_{\mathrm{claimed}}$, keys
      $k_{\det},k_{\mathrm{prv}},k_{\mathrm{emb}}$, repetition factor $r$, tag length $\ell$
    \Ensure $1$ if ownership verification succeeds; otherwise $0$
    \State $\sigma_{\mathrm{expected}}\gets
      \operatorname{MAC}_{\ell}(k_{\det},T_{\mathrm{body}})$
    \State $\sigma_{\mathrm{decoded}}\gets\mathcal{D}(P,k_{\mathrm{emb}})$
    \If{$\sigma_{\mathrm{decoded}}\neq\sigma_{\mathrm{expected}}$}
      \State \Return $0$
    \EndIf
    \State $n\gets8|c_{\mathrm{claimed}}|$;
      $k_{\mathrm{step}}\gets\operatorname{MAC}(k_{\mathrm{prv}},\sigma_{\mathrm{expected}})$
    \State $\widehat{M}\gets[\,]$; $p\gets0$
    \For{$t\in\{1,\ldots,S'\}$}
      \State Reconstruct $(o_t,\mathcal{B}_t)$ from $T^{\ast}$ and the exposed execution context
      \State $q_t\gets\lfloor\log_2|\mathcal{B}_t|\rfloor$
      \State $\pi_t\gets\operatorname{PRF}(k_{\mathrm{step}},o_t,\mathcal{B}_t)$
      \State $j'_t\gets\operatorname{IndexOf}(b'_t,\pi_t)$
        \Comment{$\bot$ if $b'_t\notin\mathcal{B}_t$}
      \State $\ell_t\gets\min\{q_t,nr-p\}$
      \If{$j'_t\neq\bot$ \textbf{and} $j'_t<2^{q_t}$}
        \State Append the first $\ell_t$ bits of $\operatorname{bits}(j'_t,q_t)$ to $\widehat{M}$
      \Else
        \State Append $\ell_t$ erasure symbols to $\widehat{M}$
      \EndIf
      \State $p\gets p+\ell_t$
      \If{$p=nr$}
        \State \textbf{break}
      \EndIf
    \EndFor
    \If{$p<nr$}
      \State Append $nr-p$ erasure symbols to $\widehat{M}$
    \EndIf
    \State $\hat{c}\gets\operatorname{MajorityDecode}(\widehat{M},r,n)$~\citep{lin2004error}
      \Comment{Ignore erasures; reject empty groups or ties}
    \State \Return $\mathbb{I}[\hat{c}=c_{\mathrm{claimed}}]$
  \end{algorithmic}
\end{algorithm}

\section{Supplementary Proofs}
\label{sec:supp-proofs}

\subsection{Correctness (No False Negatives)}

\begin{proposition}[No false negatives]
\label{prop:nfn}
Let $c$ be the copyright message configured in the protected agent, let $r$
be a positive odd repetition factor, and let $T^{\ast}$ be produced by
Trigger Task Generation using a correct parameter encoder--decoder pair. If
$\tau$ is the intact forensic behavior sequence
produced by Redundant Encoding and Behavioral Embedding under reconstructible
execution contexts, then
\[
  \operatorname{WatermarkVerify}
  (T^{\ast},\tau,c,k_{\det},k_{\mathrm{prv}},k_{\mathrm{emb}},r,\ell)=1.
\]
\end{proposition}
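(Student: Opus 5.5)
The plan is to trace Algorithm~\ref{alg:verify} line by line on the input $(T^{\ast},\tau)$ and show that every check succeeds and that the recovered bitstream equals $M$ exactly. The argument has three stages: the authentication check, the step-by-step inversion of the embedding, and majority decoding.

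\textbf{Authentication check.} Write $T^{\ast}=(T_{\mathrm{body}},P)$ with $P=\mathcal{E}(T_{\mathrm{body}},\sigma,k_{\mathrm{emb}})$ and $\sigma=\operatorname{MAC}_{\ell}(k_{\det},T_{\mathrm{body}})$.
\begin{itemize}[leftmargin=1.5em]
\item The MAC is deterministic, so $\sigma_{\mathrm{expected}}=\sigma$.
\item The encoder--decoder correctness condition stated at the start of Appendix~\ref{sec:supp-algorithms} gives $\sigma_{\mathrm{decoded}}=\mathcal{D}(P,k_{\mathrm{emb}})=\sigma$.
\item Hence the early rejection is skipped, and $k_{\mathrm{step}}=\operatorname{MAC}(k_{\mathrm{prv}},\sigma)$ coincides with the key used in Algorithm~\ref{alg:forensic-behavior}.
\end{itemize}

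\textbf{Inverting the embedding.} Let $S=|\tau|$ be the number of embedding iterations, so $S'=S$. I would prove by induction on $t$ the invariant that, at the start of verification iteration $t$, the pointer $p$ equals the embedder's $p$ at step $t$, and $\widehat{M}$ equals $M[0:p-1]$ with no erasures. The inductive step uses three facts.
\begin{itemize}[leftmargin=1.5em]
\item By the reconstructible-context hypothesis, the verifier recovers the same $(o_t,\mathcal{B}_t)$. Determinism of the PRF then gives the same $q_t$ and the same permutation $\pi_t$.
\item Because $b_t=\pi_t[j_t]$ and $\pi_t$ is a permutation, $\operatorname{IndexOf}(b_t,\pi_t)=j_t$. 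Since $j_t<2^{q_t}$, the erasure branch is avoided.
\item Since $\ell_t$ is computed identically on both sides, and $u_t=M[p:p+\ell_t-1]\|0^{q_t-\ell_t}$, the first $\ell_t$ bits of $\operatorname{bits}(j_t,q_t)$ are exactly $M[p:p+\ell_t-1]$. This requires $\operatorname{bits}$ and $\operatorname{int}$ to be mutual inverses on $q_t$-bit strings under a fixed convention.
\end{itemize}
The embedder halts precisely when $p$ reaches $nr$, so the verifier reaches $p=nr$ at iteration $S$ and breaks. The padding step is vacuous, and $\widehat{M}=M$.

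\textbf{Majority decoding.} Each group for bit $i$ consists of the positions $i+kn$ for $k=0,\ldots,r-1$. Every such position holds $c_i$ and no erasures occur, so each group contains $r\ge 1$ identical votes. The groups are nonempty, and with $r$ odd (indeed unanimous) there are no ties. Thus $\hat{c}_i=c_i$ for all $i$, and the algorithm returns $\mathbb{I}[c=c]=1$.

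The main obstacle is the bookkeeping in the inductive invariant, specifically in the final, partially filled step. There one must check that the zero-padding $0^{q_t-\ell_t}$ is discarded rather than read back, and that the loop terminates with exactly $nr$ bits and no trailing erasures. Everything else is determinism of the MAC and PRF plus encoder correctness. I would state the needed $\operatorname{int}/\operatorname{bits}$ convention explicitly, so that the prefix-bits claim is justified rather than assumed.
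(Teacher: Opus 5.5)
Your proposal is correct and follows the same route as the paper's proof. Encoder--decoder correctness makes Forensic Detection accept with $\sigma$, and both sides then derive the same $k_{\mathrm{step}}$ and $\pi_t$. Each intact $b_t$ inverts to its chunk, with the final chunk's padding discarded, so $\widehat{M}=M$ and majority decoding returns $c$. Your inductive invariant on $p$ and the explicit $\operatorname{int}/\operatorname{bits}$ convention make precise the bookkeeping that the paper compresses into a single sentence.
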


\begin{proof}
Trigger Task Generation constructs $T^{\ast}=(T_{\mathrm{body}},P)$ with
\[
  \mathcal{D}(P,k_{\mathrm{emb}})
  =\operatorname{MAC}_{\ell}(k_{\det},T_{\mathrm{body}})=\sigma.
\]
Consequently, Forensic Detection returns $(1,\sigma)$ and activates the
forensic branch. Both Behavioral Embedding and Watermark Verification derive
$k_{\mathrm{step}}=\operatorname{MAC}(k_{\mathrm{prv}},\sigma)$ and reconstruct
the same $\pi_t=\operatorname{PRF}(k_{\mathrm{step}},o_t,\mathcal{B}_t)$ at
every step. Therefore, each intact behavior $b_t=\pi_t[j_t]$ is mapped back
to the same index $j_t$ and hence to the same $q_t$-bit chunk. Because the
verifier retains only the first $\ell_t$ bits of the final chunk, their
concatenation is exactly
\[
  M=(c_0\cdots c_{n-1})\|\cdots\|(c_0\cdots c_{n-1}),
  \qquad n=8|c|,
\]
with $r$ identical copies. Majority decoding thus returns $\hat c=c$, so
Watermark Verification returns $1$.
\end{proof}

\subsection{Behavior-Sequence Unforgeability}

The following one-shot statement excludes replay of an authentic forensic
sequence generated for the same execution contexts. Such a freshness
condition is necessary because any accepted sequence could otherwise be
replayed verbatim.

\begin{proposition}[Fresh-context behavior-sequence unforgeability]
\label{prop:trajectory-unforgeability}
Let $n=8|c|$ and let $r$ be positive and odd. Consider a one-shot PPT adversary that
knows $T^{\ast}$, $k_{\det}$, and $k_{\mathrm{emb}}$, but not
$k_{\mathrm{prv}}$, has no oracle access to the target per-step permutations,
and has not observed an authentic forensic behavior for any of the pairwise
distinct, domain-separated target execution contexts. Assume also that
$n\geq\max_t q_t$. Suppose the derivation of
$k_{\mathrm{step}}$ is $\varepsilon_{\mathrm{KDF}}(\lambda)$-pseudorandom and
the domain-separated PRF evaluations are jointly indistinguishable from
independent uniform per-step permutations with multi-context advantage
$\varepsilon_{\mathrm{PRF}}(\lambda,S')$. Then
\begin{align}
  &\Pr\!\left[
    \operatorname{WatermarkVerify}
    (T^{\ast},\tau',c,k_{\det},k_{\mathrm{prv}},k_{\mathrm{emb}},r,\ell)=1
  \right] \\
  &\qquad\leq 2^{-n}
    +\varepsilon_{\mathrm{KDF}}(\lambda)
    +\varepsilon_{\mathrm{PRF}}(\lambda,S').
\end{align}
\end{proposition}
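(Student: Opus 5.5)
We prove the bound with a standard sequence of hybrid games, and then bound the success probability in the final game with an information-theoretic argument.

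\textbf{Game 0} is the real experiment. The adversary knows $T^{\ast}$, $k_{\det}$ and $k_{\mathrm{emb}}$, so it can recompute $\sigma$. The verifier derives $k_{\mathrm{step}}=\operatorname{MAC}(k_{\mathrm{prv}},\sigma)$ and the permutations $\pi_t=\operatorname{PRF}(k_{\mathrm{step}},o_t,\mathcal{B}_t)$ for $t=1,\ldots,S'$.

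\textbf{Game 1} replaces $k_{\mathrm{step}}$ with a uniformly random key, independent of everything the adversary sees. The adversary's view depends only on $k_{\det}$, $k_{\mathrm{emb}}$ and $T^{\ast}$, and never on $k_{\mathrm{prv}}$. So any change in the acceptance probability yields a distinguisher against the pseudorandomness of the $k_{\mathrm{step}}$ derivation. The reduction runs the adversary and then runs $\operatorname{WatermarkVerify}$ itself, since it knows $c$ and the public keys. This costs at most $\varepsilon_{\mathrm{KDF}}(\lambda)$.

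\textbf{Game 2} replaces the $S'$ domain-separated PRF evaluations with independent uniform permutations $\pi_t$ of $\mathcal{B}_t$. The contexts $o_t$ are pairwise distinct and domain-separated, and the adversary has no oracle access to them and has not seen any authentic behavior in them. Hence the multi-context assumption applies directly, with the reduction again simulating verification. This costs $\varepsilon_{\mathrm{PRF}}(\lambda,S')$.

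In Game 2 the goal is to bound acceptance by $2^{-n}$. The adversary's output $\tau'$ is fixed independently of the fresh uniform permutations. For each $t$ with $b'_t\in\mathcal{B}_t$, the index $j'_t=\operatorname{IndexOf}(b'_t,\pi_t)$ is uniform on $\{0,\ldots,|\mathcal{B}_t|-1\}$. Conditioned on $j'_t<2^{q_t}$, its $q_t$-bit expansion is uniform on $\{0,1\}^{q_t}$. Otherwise the step produces erasures, and so do behaviors outside $\mathcal{B}_t$ and any truncated tail.

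The extracted symbols are therefore independent across steps, and each non-erased chunk is uniform. Consider the first copy block, positions $0,\ldots,n-1$ of $\widehat{M}$. Each position is either erased or carries an independent uniform bit, and the erasure pattern is independent of those bit values.

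Next we reduce to that block. Because $r$ is odd and ties and empty groups are rejected, decoding to $c$ requires every group to be nonempty with a strict majority equal to $c_i$. To get exactly $2^{-n}$, we condition on all bits outside the first block. Take any coordinate $i$ whose first copy is present. The majority decision then depends on that uniform bit in a way that at most one of its two values gives the correct outcome. For coordinates whose first copy is erased, we would instead need a uniform non-erased copy elsewhere.

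The main obstacle is making this accounting exactly $2^{-n}$ when copies are erased. Erasures only shrink the success probability, not increase it. Still, a clean argument needs, for each coordinate $i$, one designated surviving copy whose bit is fresh and uniform given all the others. Bits from the same step are jointly uniform, so they are independent, which is enough for this.

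Where $n\ge\max_t q_t$ is needed is in showing that the $n$ designated copies can be chosen in distinct positions with mutually independent uniform bits. The $r$ copies of bit $i$ sit at positions $i+kn$, spaced $n\ge q_t$ apart, so no single chunk carries two copies of the same coordinate. Each chunk is a uniform string, so distinct positions inside it are independent. Consequently the correct-decoding event has probability at most $\prod_i \tfrac12=2^{-n}$: a coordinate with all copies erased is rejected, and each remaining coordinate is correct with probability at most $1/2$ conditioned on the rest.

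Summing the two hybrid losses with this bound gives the stated inequality.
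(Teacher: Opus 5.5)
The gap is in your final per-coordinate bound. Your two hybrids (uniform $k_{\mathrm{step}}$ at cost $\varepsilon_{\mathrm{KDF}}$, then independent uniform permutations at cost $\varepsilon_{\mathrm{PRF}}$) match the paper. So does your ideal-game analysis: the index is uniform, the step is erased when $j'_t\ge 2^{q_t}$, a surviving chunk is uniform on $\{0,1\}^{q_t}$, and chunks are independent across steps.

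You condition on all bits outside the first block and claim that, for a coordinate whose first copy survives, at most one of that copy's two values gives the correct majority. That is false once $r\ge 3$. If the other surviving copies of $c_i$ already form a strict majority for $c_i$, both values of the designated copy decode correctly. For example, take $r=3$ with copies $2$ and $3$ both equal to $c_i$. More strongly, condition on every copy in blocks $2,\ldots,r$ surviving and equalling the target bit. Then acceptance has conditional probability $1$. So no bound of the form $2^{-n}$ can hold conditionally on the other copies of the same coordinate, and the ``designated fresh uniform copy'' device cannot be repaired by choosing the copy differently. Your side remark that erasures only shrink success is also not right as a monotonicity claim. With $r=3$, two surviving copies succeed with probability $1/4$ (ties are rejected), while one surviving copy succeeds with probability $1/2$.

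The missing idea is to treat all surviving copies of $c_i$ jointly and average over them rather than condition on them. Condition only on the active-or-erasure pattern. The surviving copies of $c_i$ are then i.i.d.\ uniform bits. Complementing all of them is a measure-preserving involution that swaps the events ``majority $=c_i$'' and ``majority $=1-c_i$'' and maps ties to ties, which are rejected. Hence each coordinate decodes correctly with probability at most $1/2$. Distinct coordinates occupy disjoint, independent bit positions, so these events are independent across $i$. That gives $2^{-n}$ conditionally on the pattern, and hence after averaging over it. This is exactly the paper's bit-complement symmetry step. Your closing sentence (``at most $1/2$ conditioned on the rest'') is correct only if ``the rest'' means the copies of the other coordinates, not the other copies of the same coordinate.
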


\begin{proof}
Replace the derivation of $k_{\mathrm{step}}$ by a uniform key and then replace
the domain-separated PRF evaluations by independent uniform permutations.
The two hybrid transitions incur advantages at most
$\varepsilon_{\mathrm{KDF}}(\lambda)$ and
$\varepsilon_{\mathrm{PRF}}(\lambda,S')$, respectively.

In the ideal experiment, a supplied nonmember behavior is an erasure. For
each supplied $b'_t\in\mathcal{B}_t$, its index under the independent uniform
permutation is uniform on $\{0,\ldots,|\mathcal{B}_t|-1\}$. Condition on the
resulting per-step active-or-erasure pattern. At every active step,
conditioning on $j'_t<2^{q_t}$ leaves $j'_t$ uniform on
$\{0,\ldots,2^{q_t}-1\}$, so the recovered payload positions at that step are
independent unbiased bits. Independence of the ideal per-step permutations
makes these bits independent across active steps.

Since $n\geq\max_t q_t$, the $r$ positions associated with any fixed
copyright bit lie in distinct steps. For each copyright bit, an empty group
or a tie is rejected; otherwise bit-complement symmetry implies that its
surviving majority equals the fixed target bit with probability at most
$1/2$. Distinct copyright bits use disjoint random bit positions, so the
conditional acceptance probability is at most $2^{-n}$. Averaging over the
active-or-erasure pattern and adding the two hybrid losses proves the claim.
\end{proof}

\subsection{Robustness}
\label{sec:supp-robustness}

For closed-form bounds matching the experiments, suppose
$|\mathcal{B}_t|=2^{\beta}$ at every embedding step, so each step carries the
constant capacity $\beta=\log_2|\mathcal{B}_t|\geq1$. Let $n=8|c|$, let $r$
be positive and odd, and let
\[
  M=(c_0\cdots c_{n-1})\|\cdots\|(c_0\cdots c_{n-1})
\]
contain $r$ copies and have length $N=nr$. Copy $k$ of bit $c_i$ occupies
position $i+kn$ and step
\[
  t_{i,k}=\left\lfloor\frac{i+kn}{\beta}\right\rfloor+1,
  \qquad
  S=\left\lceil\frac{nr}{\beta}\right\rceil.
\]
Watermark Verification ignores erased copies and takes a majority over the
surviving copies, rejecting when none survives or when a tie occurs.

\begin{proposition}[Robustness to prefix truncation]
\label{prop:robust-truncation}
Let $\alpha\in[0,1]$. Suppose only the first
$S'=\lfloor\alpha S\rfloor$ steps of an otherwise
uncorrupted forensic behavior sequence are retained. If
\begin{equation}
  \alpha\geq
  \frac{\lceil n/\beta\rceil}{S}
  \approx\frac{1}{r},
\end{equation}
then Watermark Verification returns $1$ with probability $1$.
\end{proposition}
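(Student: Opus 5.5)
We show that the retained prefix already contains at least one full, uncorrupted copy of $c$, namely the first copy $k=0$. The claim then follows because every surviving copy of every bit is correct, so majority decoding cannot fail.

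\textbf{Step 1: enough steps survive.} The hypothesis $\alpha\geq\lceil n/\beta\rceil/S$ gives $\alpha S\geq\lceil n/\beta\rceil$. Since $\lceil n/\beta\rceil$ is an integer, $S'=\lfloor\alpha S\rfloor\geq\lceil n/\beta\rceil$.

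\textbf{Step 2: locate the first copy.} Copy $k=0$ of bit $c_i$ sits at position $i\in\{0,\ldots,n-1\}$ and at step $t_{i,0}=\lfloor i/\beta\rfloor+1$. This is at most $\lfloor (n-1)/\beta\rfloor+1=\lceil n/\beta\rceil\leq S'$. Hence all positions $0,\ldots,n-1$ of $M$ are carried by retained steps.

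\textbf{Step 3: retained steps decode exactly.} Every retained behavior $b'_t=b_t$ is intact. Its execution context is reconstructible, and the tag check passes because $T^{\ast}$ is authentic. So, exactly as in the proof of Proposition~\ref{prop:nfn}, the verifier rebuilds the same $k_{\mathrm{step}}$ and $\pi_t$, recovers $j_t$, and appends the correct $\ell_t$ bits of $M$. The running counter $p$ evolves identically to embedding, so bit positions are aligned. One caveat: the final embedding step may be partial ($\ell_t<\beta$). This only shortens the last chunk and does not shift earlier positions. Every step after $S'$, together with any positions never reached, becomes an erasure via the padding branch of Algorithm~\ref{alg:verify}.

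\textbf{Step 4: majority decoding.} For each $i$, the group of copies of $c_i$ is nonempty by Step 2. Every non-erased copy in the group equals $c_i$, either the unique copy $k=0$ or other copies that also fell within the prefix. Therefore no tie can occur and the majority is $c_i$, so $\hat c=c$ and verification returns $1$ deterministically.

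The main obstacle is Step 3's alignment bookkeeping, i.e. confirming that truncation never shifts bit positions of surviving steps. I would handle it by noting that $p$ after step $t$ equals $\min\{t\beta,nr\}$ in both algorithms.

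Finally, the approximation $\lceil n/\beta\rceil/S\approx 1/r$ follows from $S=\lceil nr/\beta\rceil$ when $\beta\mid n$ or $n\gg\beta$. This is just a remark, not part of the guarantee.
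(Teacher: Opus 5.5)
Your proof is correct and follows the same route as the paper. Both arguments show that $S'\geq\lceil n/\beta\rceil$, so the entire first copy of $c$ (positions $0,\ldots,n-1$) lies in the retained prefix, and both conclude that erasure-aware majority decoding over nonempty, unanimously correct groups returns $\hat c=c$. Your version additionally spells out bookkeeping the paper leaves implicit: the integrality argument for $\lfloor\alpha S\rfloor$, the identity $\lfloor (n-1)/\beta\rfloor+1=\lceil n/\beta\rceil$, and the alignment $p=\min\{t\beta,nr\}$.
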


\begin{proof}
The retained prefix contains the first $S'\beta$ encoded bits. All positions
$0,\ldots,n-1$, which form the first complete copy of $c$, are retained when
$S'\beta\geq n$, equivalently when
$S'\geq\lceil n/\beta\rceil$. The assumed lower bound on $\alpha$ guarantees
this condition. Every copyright bit therefore has at least one surviving
copy, and all surviving copies are correct. Erasure-aware majority decoding
returns $\hat c=c$, so verification succeeds.
\end{proof}

\begin{remark}
For the default configuration $n=104$, $\beta=6$, and $r=3$, we have $S=52$
and the sufficient retained-prefix threshold is
$\lceil104/6\rceil/52=18/52\approx34.6\%$.
\end{remark}

\begin{proposition}[Robustness to random behavior substitution]
\label{prop:robust-substitution}
Let $\rho\in[0,1]$ and assume $n\geq\beta$. Independently at each step,
replace the authentic
behavior with probability $\rho$ by a behavior sampled uniformly from
$\mathcal{B}_t$. Then
\begin{equation}
  \Pr\!\left[
    \operatorname{WatermarkVerify}
    (T^{\ast},\tau',c,k_{\det},k_{\mathrm{prv}},k_{\mathrm{emb}},r,\ell)=1
  \right]
  \geq 1-np_{\mathrm{bit}}(\rho),
\end{equation}
where, for $X\sim\operatorname{Bin}(r,\rho/2)$,
\begin{equation}
  p_{\mathrm{bit}}(\rho)
  :=\Pr\!\left[X\geq\left\lceil\frac{r}{2}\right\rceil\right].
\end{equation}
\end{proposition}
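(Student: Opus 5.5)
Since $T^{\ast}$ is produced by Trigger Task Generation, the MAC check in Watermark Verification passes, and the verifier reconstructs the same $k_{\mathrm{step}}$ and the same permutations $\pi_t$ used in embedding, exactly as in Proposition~\ref{prop:nfn}. Acceptance therefore reduces to the event that majority decoding returns $\hat c=c$. By a union bound over the $n$ copyright bits, it suffices to show that for each fixed $i$, the probability that the majority over the copies of $c_i$ fails (wrong majority, tie, or empty group) is at most $p_{\mathrm{bit}}(\rho)$. Since $|\mathcal{B}_t|=2^{\beta}$, every member index satisfies $j'_t<2^{q_t}$, so there are no erasures. Each of the $r$ copies is present, and the group is never empty.

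\textbf{Step 1: per-copy error probability.} Fix a step $t$. With probability $1-\rho$ the behavior is authentic, and all $\beta$ recovered bits are correct. With probability $\rho$ it is replaced by a uniform element of $\mathcal{B}_t$. Its index under the bijection $\pi_t$ is then uniform on $\{0,\ldots,2^{\beta}-1\}$, so its $\beta$ bits are i.i.d.\ uniform. Hence any single recovered bit position is wrong with probability exactly $\rho/2$. Bits within one step may be correlated, but we only need single positions in what follows.

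\textbf{Step 2: independence across copies.} Because $n\geq\beta$, the positions $i+kn$ for $k=0,\ldots,r-1$ are at mutual distance at least $n\geq\beta$. They therefore fall in distinct steps $t_{i,k}$. Substitution events and replacement draws are independent across steps, so the $r$ copy-error indicators for bit $i$ are independent Bernoulli$(\rho/2)$ variables. Their sum is $X\sim\operatorname{Bin}(r,\rho/2)$.

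\textbf{Step 3: decoding failure.} Since $r$ is odd, no tie can occur. The majority is wrong iff $X\geq\lceil r/2\rceil$, which gives per-bit failure probability exactly $p_{\mathrm{bit}}(\rho)$. The union bound then yields $\Pr[\text{accept}]\geq 1-np_{\mathrm{bit}}(\rho)$.

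The main obstacle is Step 2: establishing that the copies of one bit land in distinct steps, and therefore fail independently, despite the within-step correlation noted in Step 1. I would handle it by the distance argument above, checking that two positions with difference at least $\beta$ cannot share a block $\lfloor\cdot/\beta\rfloor$. I would also remark that the union bound avoids any need for independence across different copyright bits that share a step.
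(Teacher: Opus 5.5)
Your proposal is correct and follows the paper's proof essentially step for step. It uses the same four ingredients: a per-copy error probability of $\rho/2$ from uniformity of the replacement's index under the fixed $\pi_t$, independence of the $r$ copies because $n\geq\beta$ places them in distinct steps, a $\operatorname{Bin}(r,\rho/2)$ majority-failure probability, and a union bound over the $n$ copyright bits. Your extra remarks (no erasures since $|\mathcal{B}_t|=2^{\beta}$, no ties since $r$ is odd, the MAC check passing, and the explicit block-distance check) only make explicit what the paper's proof leaves implicit.
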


\begin{proof}
The $r$ copies of $c_i$ occur at positions
$i,i+n,\ldots,i+(r-1)n$. Since $n\geq\beta$, these positions lie in distinct
steps. Their substitution events and replacement draws are therefore
independent under the stated attack model.

An unsubstituted copy is correct. Conditional on substitution, the
replacement behavior is uniform in $\mathcal{B}_t$ and hence its index under
the fixed permutation $\pi_t$ is uniform in
$\{0,\ldots,2^{\beta}-1\}$. Each decoded bit is then wrong with probability
$1/2$, giving the unconditional per-copy error probability $\rho/2$.
Consequently, the number of wrong copies of $c_i$ follows
$\operatorname{Bin}(r,\rho/2)$, and majority decoding fails for this bit with
probability $p_{\mathrm{bit}}(\rho)$. A union bound over the $n$ copyright
bits gives a total failure probability at most
$np_{\mathrm{bit}}(\rho)$, proving the result.
\end{proof}

\begin{remark}
For $r=3$,
\[
  p_{\mathrm{bit}}(\rho)
  =3\left(\frac{\rho}{2}\right)^2
    \left(1-\frac{\rho}{2}\right)
   +\left(\frac{\rho}{2}\right)^3.
\]
This equals approximately $0.007$, $0.028$, and $0.156$ at
$\rho=0.1$, $0.2$, and $0.5$, respectively. The union bound may be loose
because different copyright bits can share step-level substitution events;
no cross-bit independence is required for the bound.
\end{remark}

\end{document}